\newcommand{\beq}[0]{\begin{equation}}
\newcommand{\eeq}[0]{\end{equation}}
\newcommand{\bw}[0]{\begin{widetext}}
\newcommand{\ew}[0]{\end{widetext}}
\newcommand{\bc}[0]{\begin{center}}
\newcommand{\ec}[0]{\end{center}}
\newcommand{\bwn}[0]{\begin{widetext}\begin{eqnarray}}
\newcommand{\ewn}[0]{\end{eqnarray}\end{widetext}}
\newcommand{\beqn}[0]{\begin{eqnarray}}
\newcommand{\eeqn}[0]{\end{eqnarray}}
\newcommand{\proj}[1]{|#1\rangle\!\langle #1|}
\newcommand{\ket}[1]{|#1\rangle}
\def\calB{{\cal B}}
\def\calT{{\cal T}}
\definecolor{myurlcolor}{rgb}{0,0,0.7}
\definecolor{myrefcolor}{rgb}{0.8,0,0}
\newtheorem{thm}{Theorem}
\newtheorem*{main*}{Main result}
\newtheorem*{thm*}{Theorem}
\newtheorem{lem}[thm]{Lemma}
\newcommand{\ot}[0]{\otimes}
\newcommand{\beu}{\begin{equation}}
\newcommand{\eeu}{\end{equation}}
\newcommand{\be}{\begin{eqnarray}}
\newcommand{\ee}{\end{eqnarray}}
\newcommand{\ba}{\begin{array}}
\newcommand{\ea}{\end{array}}
\newcommand{\Tr}[0]{\mathrm{Tr}}
\begin{document}

\title{Entanglement and nonlocality are inequivalent for any number of
particles}

\author{R. Augusiak}
\affiliation{ICFO--Institut de Ciencies Fotoniques, 08860
Castelldefels (Barcelona), Spain}

\author{M. Demianowicz}
\affiliation{ICFO--Institut de Ciencies Fotoniques, 08860
Castelldefels (Barcelona), Spain}

\author{J. Tura}
\affiliation{ICFO--Institut de Ciencies Fotoniques, 08860
Castelldefels (Barcelona), Spain}

\author{A. Ac\'in}
\affiliation{ICFO--Institut de Ciencies Fotoniques, 08860
Castelldefels (Barcelona), Spain}
\affiliation{ICREA--Instituci\'o
Catalana de Recerca i Estudis Avan\c{c}ats, Lluis Companys 23,
08010 Barcelona, Spain}

\begin{abstract}
Understanding the relation between nonlocality and entanglement is
one of the fundamental problems in quantum physics.
In the bipartite case, it is known that the correlations observed
for some entangled quantum states can be explained within the
framework of local models, thus proving that these resources are
inequivalent in this scenario. However, except for a
single example of an entangled three-qubit state that has a local
model, almost nothing is known about such relation in multipartite systems. We
provide a general construction of genuinely multipartite entangled
states that do not display genuinely multipartite nonlocality,
thus proving that entanglement and nonlocality are inequivalent
for any number of particles.
\end{abstract}
\maketitle

\textit{Introduction.} Nonlocality -- the phenomenon of
the impossibility of describing  by local hidden variable (LHV)
theories the correlations arising from measuring quantum states --
is a fundamental characteristics  of quantum mechanics. It was
Bell who first pointed out that there exist quantum states for
which underlying  classical variables cannot account for the
measurement statistics on them \cite{Bell}. Such states are
called nonlocal and they violate Bell
inequalities (see \cite{przegladowka-dani}).
%
%
Having been confronted with the result of Bell one
might be tempted to identify nonlocality with entanglement, another
essential resource of quantum information theory.  This intuition
is, however, not correct as the relationship between these two
notions is more involved: while LHV models trivially exist for
separable states, not all entangled states are nonlocal.

The first step in the exploration of this inequivalence
was taken by Werner in Ref. \cite{Werner}.
There, he introduced a family of highly symmetric states, nowadays
known as the Werner states, and provided an explicit LHV model
reproducing the measurement statistics obtained when two parties
perform local projective measurements on
some states from this family.
Building on this model, Barrett proved that there are entangled
Werner states that remain local even if general measurements are
performed \cite{Barrett}. Both these models
were later adapted to mixtures of any state $\rho$ and
the white noise \cite{Almeida}, such as the isotropic states
for which $\rho$ is the maximally entangled state
\cite{isotropic}. The nonlocal properties of noisy states have
also been related to an important mathematical notion known as the
Grothendieck constant in \cite{Groth}.

\begin{figure}[h!]
\includegraphics[width=0.5\columnwidth]{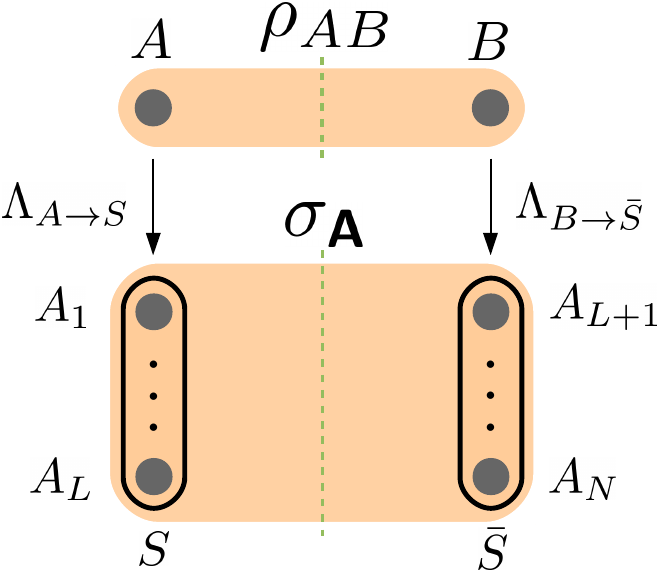}
\caption{\textbf{Schematic depiction of our construction.} The
initial state $\rho_{AB}$ is an entangled bipartite state with a
local model for generalized measurements. The application of
properly chosen local channels $\Lambda_{A\to S}$ and
$\Lambda_{B\to\bar{S}}$ with $S=A_1\ldots A_L$ and
$\bar{S}=A_{L+1}\ldots A_N$ to both subsystems of $\rho_{AB}$
creates an $N$-partite state $\sigma_{\boldsymbol{\mathsf{A}}}$
that is genuinely multipartite entangled and bilocal with respect
to the bipartition $S|\bar{S}$. The dashed green line determines
the partitions with respect to which both $\rho_{AB}$ and
$\sigma_{\boldsymbol{\mathsf{A}}}$ are local.}\label{fig:scheme}
\end{figure}

Very little is known about the relation between entanglement and nonlocality in
the general multipartite scenario. Here, this question becomes much
subtler as the multiparty scenario offers a richer
variety of different types of entanglement and nonlocality. For
instance, for any number of parties, it is trivial to construct a
non-separable, and thus entangled state, that has a local model:
it suffices to combine an entangled and local Werner state for two
parties with a fully product state for the remaining ones.
However, this is clearly nothing but a manifestation of the
inequivalence between entanglement and locality for two parties.
Thus, the most natural question in the multipartite scenario is
whether for an arbitrary number of parties $N$, there always
exists a gap between genuinely $N$-party entanglement and
genuinely $N$-party nonlocality. Our main goal
is to show that this is the case, thus proving that entanglement
and nonlocality are inequivalent for {\it any} number of parties.

The departure point of our proof are bipartite entangled states
with local models for general measurements. Basing on them, we
provide a general construction of genuinely multipartite entangled
(GME) states of $N$ parties that do not display genuinely
multipartite nonlocality. In fact, our construction is such that
the resulting $N$-party entangled state has a bilocal model, in
which the parties are divided into two groups, inherited from the
local model of the initial bipartite state (see Fig.
\ref{fig:scheme}). We also generalize our construction to map any
$N$-party GME state with a $K$-party local model to an $N'$-party
GME state, with $N'>N$, that also has a $K$-local model.

Before moving to the proof of our results, it is worth mentioning
that for three parties T\'oth and Ac\'in found a fully local model
for arbitrary projective measurements on a genuinely entangled
three--qubit state  \cite{TothAcin}. This model has been later
extended to general measurements
\cite{Tulio}, proving ultimately that there are GME states of three parties that do not display any form of nonlocality.
However, it is unknown whether this model can be extended to more parties. 

\textit{Preliminaries.} Let us start by introducing some notation
and terminology. Consider $N$ parties
$\boldsymbol{\mathsf{A}}:=A_1,\ldots,A_N$ (for small $N$ also
denoted $A$, $B$, etc.) sharing an $N$-partite quantum state
$\rho_{\boldsymbol{\mathsf{A}}}\in \mathcal{B}(\mathcal{H}_{N,d})$,
where $\mathcal{H}_{N,d}=(\mathbbm{C}^d)^{\ot N}$ and
$\mathcal{B}(\mathcal{H})$ stands for the set of bounded linear
operators acting on the Hilbert space $\mathcal{H}$. Moreover, by
$\mathcal{S}_{M,d}$ and $P^{X}_{\mathrm{sym}}$ we
denote, respectively, the symmetric subspace of
$\mathcal{H}_{M,d}$ and the projector onto the symmetric
subspace of the subsystem $X$.

Let us then partition the parties
into $K$ pairwise disjoint groups $S_i$ such that by adding them one
recovers $\boldsymbol{\mathsf{A}}$,
and call it a $K$-partition; for $K=2$ we call it a bipartition
and denote $S|\bar{S}$ with $\bar{S}$
being the complement of $S$ in $\boldsymbol{\mathsf{A}}$. Denoting
by $\mathcal{S}_K$ the set of all $K$-partitions, we say that
$\rho_{\boldsymbol{\mathsf{A}}}$ is {\it $K$-separable} ({\it
biseparable} for $K=2$) if it is a probabilistic mixture of
$N$-partite states separable with respect to some
$K$-partitions, i.e.,
\begin{equation}\label{separable}
 \rho_{\boldsymbol{\mathsf{A}}}=\sum_{S\in\mathcal{S}_K}p_S\sum_{i}
q_S^i\bigotimes_{k=1}^K\proj{\psi_{S_k}^i}.
\end{equation}
Here, $p_S$ and $q_S^i$ are probability distributions and
$\ket{\psi_{S_k}^i}$ are pure states defined on the $S_k$ subsystem.
One then calls $\rho_{\boldsymbol{\mathsf{A}}}$ \textit{fully
separable} if it is $N$-separable, and, \textit{genuinely
multipartite entangled} (GME) if it does not admit any of the
above forms of separability; in particular, it is not biseparable.

Analogously to the notions of separability one introduces those of
locality. Imagine that on their share of the state
$\rho_{\boldsymbol{\mathsf{A}}}$, each party $A_i$ performs a
measurement $M_i=\{M_{a_i}^{(i)}\}$, where $a_i$ enumerate the
outcomes of $M_i$, while $M_{a_i}^{(i)}$ are the measurement
operators, i.e., $M^{(i)}_{a_i}\ge 0$ and
$\sum_{a_i}M^{(i)}_{a_i}=\mathbbm{1}_d$. If, additionally,
$M_{a_i}^{(i)}$ are supported on orthogonal subspaces, we call the
corresponding measurement \textit{projective} (PM), and
\textit{generalized} (GM; also called POVM) otherwise. Now,
adopting the definition from Ref. \cite{Svetlichny}, one says that
the state $\rho_{\boldsymbol{\mathsf{A}}}$ is \textit{$K$-local
for GMs}, or shortly \textit{$K$-local}, if for any choice of
measurements $\boldsymbol{\mathsf{M}}:=M_1,\ldots,M_N$, the
probability of obtaining the outcomes
$\boldsymbol{\mathsf{a}}:=a_1,\ldots, a_N$ decomposes as
\begin{equation}\label{Kmodel}
p(\boldsymbol{\mathsf{a}}|\boldsymbol{\mathsf{M}})=\sum_{S\in\mathcal{S}_K}p_{S}
\int\mathrm{d}\lambda\,\omega_{S}(\lambda)\prod_{k=1}^Kp_k(\boldsymbol{
\mathsf{a}}_{S_k}
|\boldsymbol{\mathsf{M}}_{S_k},\lambda).
\end{equation}
Here, the sum goes over all possible $K$-partitions of
$\boldsymbol{\mathsf{A}}$, $p_{S}$ and $\omega_{S}$ are
probability distributions, and $p_k(\cdot|\cdot)$ is the
probability (also called \textit{response function}) that the
parties belonging to $S_k$ obtain $\boldsymbol{\mathsf{a}}_{S_k}$
upon measuring $\boldsymbol{\mathsf{M}}_{S_k}$, while having the
\textit{classical information} $\lambda$.
Accordingly, a state $\rho_{\boldsymbol{\mathsf{A}}}$ admitting
(\ref{Kmodel}) is said to have a \textit{$K$-local model}. In
particular, if $K=N$ we say that the state is \textit{fully
local}, while if $K=2$ --- \textit{bilocal}. Notice that there are
also local models reproducing projective measurements only, and
mixed ones, i.e., those reproducing GMs for some parties and PMs
for the rest. By comparing (\ref{separable}) and (\ref{Kmodel}),
it is direct to realize that every $K$--separable state is
$K$--local. Multipartite states which do not admit any form
of bilocality are called genuinely multipartite nonlocal (GMN).

It should be noted that such definition of $K$-locality,
Eq. (\ref{Kmodel}), has been shown to be inconsistent
with an operational interpretation of nonlocality given in Refs
\cite{gallego,bancal}. However, we use it here because of its
direct analogy to entanglement. Moreover, it allows us to state
our construction in a general way and facilitates the proof of our
result. Nevertheless, as we argue below, the inequivalence between
entanglement and nonlocality also holds for these operational
definitions.

\textit{Inequivalence of entanglement and nonlocality.}
%
We are now in position to state and
prove our main result.
\begin{main*}\label{glowne}
Entanglement and nonlocality are inequivalent
for any number of parties $N$, as for any $N$ there exist genuinely
entangled $N$--partite states with bilocal models.
\end{main*}

To prove the result we proceed in two steps. First, we show that
any bipartite local state can be converted into a multipartite
state with a bilocal model. Then, we argue that such construction
may lead to GME states for any $N$.

As to the first step, we generalize the
observation made by Barrett \cite{Barrett}. Let $\varrho_{AB}\in
\mathcal{B}(\mathcal{H}_{2,d})$ be arbitrary and let
\begin{equation}
\Lambda_{A\to S}:\mathcal{B}(\mathbbm{C}^{d})\to
\mathcal{B}((\mathbbm{C}^{d'})^{\ot L})
\end{equation}
and
\begin{equation}
\Lambda_{B\to\bar{S}}:\mathcal{B}(\mathbbm{C}^{d})\to
\mathcal{B}((\mathbbm{C}^{d'})^{\ot N-L})
\end{equation}
be a pair of quantum channels sending operators acting
on a single-party Hilbert space $\mathbbm{C}^d$ to operators
acting on $L$-partite and $(N-L)$-partite Hilbert spaces of local
dimension $d'$, respectively, with $S=A_1\ldots A_{L}$ and
$\bar{S}=A_{L+1}\ldots A_{N}$. Now, one can prove the following lemma.
\begin{lem}\label{Lem1}
 If $\rho_{AB}$  has a local model for generalized
measurements, then, for any pair of quantum channels
$\Lambda_{A\to S}$ and  $\Lambda_{B\to \bar{S}}$ defined above,
the $N$-partite state
\begin{equation}\label{stan-sigma}
\sigma_{\boldsymbol{\mathsf{A}}}=(\Lambda_{A\to
S}\ot\Lambda_{B\to \bar{S}})(\rho_{AB})
\end{equation}
has a bilocal model for any measurements.
\end{lem}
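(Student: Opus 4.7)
The plan is to push the channels onto the measurements via the Heisenberg-picture duality, so that local measurements on the multipartite state $\sigma_{\boldsymbol{\mathsf{A}}}$ get reinterpreted as a bipartite measurement on the underlying state $\rho_{AB}$; the local model for $\rho_{AB}$ then automatically supplies the desired bilocal decomposition with respect to the bipartition $S|\bar{S}$.

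More concretely, fix any product measurement $\boldsymbol{\mathsf{M}}=M_1,\ldots,M_N$ on $\sigma_{\boldsymbol{\mathsf{A}}}$ and denote the joint measurement operators on the two groups by
\begin{equation}
E^{S}_{\boldsymbol{\mathsf{a}}_S}=\bigotimes_{i\in S}M^{(i)}_{a_i},\qquad E^{\bar{S}}_{\boldsymbol{\mathsf{a}}_{\bar S}}=\bigotimes_{i\in\bar{S}}M^{(i)}_{a_i}.
\end{equation}
Using $\sigma_{\boldsymbol{\mathsf{A}}}=(\Lambda_{A\to S}\otimes\Lambda_{B\to\bar{S}})(\rho_{AB})$ and the adjoints $\Lambda^{*}$ defined by $\Tr[X\,\Lambda(Y)]=\Tr[\Lambda^{*}(X)\,Y]$, the Born rule gives
\begin{equation}
p(\boldsymbol{\mathsf{a}}|\boldsymbol{\mathsf{M}})=\Tr\!\left[\bigl(\widetilde{M}^{A}_{\boldsymbol{\mathsf{a}}_S}\otimes\widetilde{M}^{B}_{\boldsymbol{\mathsf{a}}_{\bar S}}\bigr)\rho_{AB}\right],
\end{equation}
where $\widetilde{M}^{A}_{\boldsymbol{\mathsf{a}}_S}:=\Lambda^{*}_{A\to S}(E^{S}_{\boldsymbol{\mathsf{a}}_S})$ and analogously for $B$. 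The first sub-step is to check that the families $\{\widetilde{M}^{A}_{\boldsymbol{\mathsf{a}}_S}\}_{\boldsymbol{\mathsf{a}}_S}$ and $\{\widetilde{M}^{B}_{\boldsymbol{\mathsf{a}}_{\bar S}}\}_{\boldsymbol{\mathsf{a}}_{\bar S}}$ are genuine POVMs on $A$ and $B$ respectively: positivity follows because $\Lambda^{*}$ is completely positive, and completeness follows because $\sum_{\boldsymbol{\mathsf{a}}_S}E^{S}_{\boldsymbol{\mathsf{a}}_S}=\mathbbm{1}$ together with the unitality $\Lambda^{*}(\mathbbm{1})=\mathbbm{1}$, which is the dual of $\Lambda$ being trace-preserving.

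Once this is in place, the hypothesis on $\rho_{AB}$ applies directly: since $\rho_{AB}$ admits a local model for generalized measurements, there exist a measure $\omega(\lambda)$ and response functions $p_{A}$, $p_{B}$ such that for every pair of POVMs on $A$ and $B$
\begin{equation}
\Tr\!\left[(\widetilde{M}^{A}_{\boldsymbol{\mathsf{a}}_S}\otimes\widetilde{M}^{B}_{\boldsymbol{\mathsf{a}}_{\bar S}})\rho_{AB}\right]=\int\!\mathrm{d}\lambda\,\omega(\lambda)\,q_{A}(\boldsymbol{\mathsf{a}}_S|\boldsymbol{\mathsf{M}}_S,\lambda)\,q_{B}(\boldsymbol{\mathsf{a}}_{\bar S}|\boldsymbol{\mathsf{M}}_{\bar S},\lambda),
\end{equation}
with $q_{A}(\boldsymbol{\mathsf{a}}_S|\boldsymbol{\mathsf{M}}_S,\lambda):=p_{A}(\widetilde{M}^{A}_{\boldsymbol{\mathsf{a}}_S}|\lambda)$ and similarly for $q_{B}$. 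Substituting back into the previous display yields exactly Eq.~(\ref{Kmodel}) for $K=2$, concentrated on the single bipartition $S|\bar{S}$ (i.e.\ $p_{S}$ a delta on that bipartition), which is the desired bilocal model for $\sigma_{\boldsymbol{\mathsf{A}}}$.

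There is really no hard obstacle here once the duality viewpoint is adopted; the only subtle point is verifying that normalization of the effective POVMs does not fail, which is precisely why one needs the channels $\Lambda_{A\to S}$, $\Lambda_{B\to\bar{S}}$ to be trace-preserving rather than merely completely positive. It is also worth noting that the argument never touches the detailed structure of the local groups $S$ and $\bar{S}$, so it does \textbf{not} upgrade the bilocal model to anything finer; genuine multipartite entanglement of $\sigma_{\boldsymbol{\mathsf{A}}}$, which is the job of the \emph{second} step of the main theorem's proof, must be established by a separate choice of the channels $\Lambda_{A\to S}$ and $\Lambda_{B\to\bar{S}}$.
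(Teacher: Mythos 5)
Your proposal is correct and follows essentially the same route as the paper's own proof: pull the local product measurements back through the dual (unital, positive) maps to obtain effective POVMs $\bar{M}^A$, $\bar{M}^B$ on $\rho_{AB}$, and then invoke its local model to define the response functions for $\sigma_{\boldsymbol{\mathsf{A}}}$ across $S|\bar{S}$. The remarks on unitality guaranteeing normalization and on the bilocality being concentrated on the single bipartition match the paper's argument exactly.
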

\begin{proof} The reasoning is analogous to the one by Barrett from Ref.
\cite{Barrett}, but for completeness we present it here.

The fact that $\rho_{AB}$ has a local model for generalized
measurements means that the probabilities of obtaining results
$a,b$ upon performing measurements $M_A=\{M^A_a\}$ and
$M_B=\{M^B_b\}$, respectively, by the parties $A$ and $B$, assume
the ``local'' form (\ref{Kmodel}), which for $N=2$ simplifies to
\begin{eqnarray}\label{model}
 \hspace{-0.4cm}p(a,b|M_A,M_B)&=&
 \!\int\!\mathrm{d}
\lambda\,\omega(\lambda)p_{\rho}(a|M_A,\lambda)p_{\rho}(b|M_B,\lambda).
\end{eqnarray}
Here we have used the subscript $\rho$ to emphasize that the
probabilities correspond to $\rho_{AB}$. Exploiting this model,
we will now demonstrate that $\sigma_{\boldsymbol{\mathsf{A}}}$ is
bilocal with respect to the bipartition $S|\bar{S}=A_1\ldots
A_L|A_{L+1}\ldots A_N$. To this end, let us assume that the
parties perform measurements $M_i=\{M_{a_i}^{(i)}\}$,
$i=1,\ldots,N$, on their shares of the state
$\sigma_{\boldsymbol{\mathsf{A}}}$. Then, denoting by
$\Lambda^{\dagger}_{S\to A}$ and $\Lambda^{\dagger}_{\bar{S}\to
B}$ the dual maps of $\Lambda_{A\to S}$ and $\Lambda_{B\to
\bar{S}}$ \cite{footnote}, respectively, we define the following operators
\begin{equation}\label{opus1}
 \bar{M}^{A}_{\boldsymbol{\mathsf{a}}_S}=\Lambda_{S\to
A}^{\dagger}\left(\bigotimes_{i=1}^{L}M^{(i)}_{a_i}\right), \quad
\bar{M}^B_{\boldsymbol{\mathsf{a}}_{\bar{S}}}=\Lambda_{\bar{S}\to
B}^{\dagger}\left(\bigotimes_{i=L+1}^{N}M^{(i)}_{a_{i}}\right)
\end{equation}
acting on $\mathbbm{C}^{d}$ and indexed by the outcomes
$\boldsymbol{\mathsf{a}}_S:=a_1,\ldots,a_L$ and
$\boldsymbol{\mathsf{a}}_{\bar{S}}:=a_{L+1},\ldots,a_N$. Since the
dual map of a quantum channel is positive and unital (it preserves
the identity operator), it is direct to see that the operators
(\ref{opus1}) form generalized measurements,
denoted $\bar{M}_A$ and $\bar{M}_B$. With their aid, let us now
define the response functions for the state
$\sigma_{\boldsymbol{\mathsf{A}}}$ corresponding to the parties
$A_1,\ldots,A_L$ and $A_{L+1},\ldots,A_N$, respectively, as
$p_{\sigma}(\boldsymbol{\mathsf{a}}_S|\boldsymbol{\mathsf{M}}_S,\lambda)=p_{\rho
}(\boldsymbol{\mathsf{a}}_S|\bar{M}_A,\lambda)$
and
$p_{\sigma}(\boldsymbol{\mathsf{a}}_{\bar{S}}|\boldsymbol{\mathsf{M}}_{\bar{S}},
\lambda)=p_{\rho}(\boldsymbol{\mathsf{a}}_{\bar{S}}|\bar{M}_B,\lambda).$
Then,
\begin{eqnarray}\label{calculation}
p(\boldsymbol{\mathsf{a}}|\boldsymbol{\mathsf{M}})
&&=\Tr[(M^{(1)}_{a_1}\ot\ldots\ot
M^{(N)}_{a_N})\sigma_{\boldsymbol{\mathsf{A}}}]\nonumber\\
&&=\Tr[(M^{(1)}_{a_1}\ot\ldots\ot
M^{(N)}_{a_N})(\Lambda_{A\to S}\ot\Lambda_{B\to\bar{S}})(\rho_{AB})]\nonumber\\
&&=\Tr[\bar{M}^A_{a_1\ldots a_{L}}\ot \bar{M}^B_{a_{L+1}\ldots
a_N}\rho_{AB}]\nonumber\\
&&=\int\mathrm{d}
\lambda\,\omega(\lambda)p_{\rho}(\boldsymbol{\mathsf{a}}_S|\bar{M}_A,\lambda)p_{
\rho }(\boldsymbol{\mathsf{a}}_{\bar{S}}|\bar{M}_B,\lambda)\nonumber\\
&&=\int\mathrm{d}\lambda\,\omega(\lambda)p_{\sigma}(\boldsymbol{\mathsf{a}}
_S|\boldsymbol{\mathsf{M}}_S,\lambda)p_{\sigma}(\boldsymbol{\mathsf{a}}_{\bar{S}
}|\boldsymbol{\mathsf{M}}_{ \bar{S}},
\lambda),
\end{eqnarray}
where we have utilized Eqs. (\ref{opus1}) and
the definition of $\sigma_{\boldsymbol{\mathsf{A}}}$.
It thus follows that
$\sigma_{\boldsymbol{\mathsf{A}}}$ has a
bilocal model for GMs with respect to $A_1\ldots A_L|A_{L+1}\ldots
A_N$.
\end{proof}

The critical point of our approach will
be to observe that the above mapping of local bipartite
states to bilocal multipartite ones may lead to GME states. To argue
this, we need a technical result concerning genuine multipartite entanglement.

\begin{lem}\label{Lem2}
Consider an $N$-partite state
$\sigma_{\boldsymbol{\mathsf{A}}}\in\mathcal{B}(\mathcal{H}_{N,d})$ and
assume that with respect to some bipartition $S|\bar{S}$, the
subsystems $S$ and $\bar{S}$ are symmetric, that is,
$P_{\mathrm{sym}}^S\otimes
P_{\mathrm{sym}}^{\bar{S}}\sigma_{\boldsymbol{\mathsf{A}}}
P_{\mathrm{sym}}^S\otimes
P_{\mathrm{sym}}^{\bar{S}}=\sigma_{\boldsymbol{\mathsf{A}}}$
holds. If $\sigma_{\boldsymbol{\mathsf{A}}}$ is not GME, then it
is biseparable with respect to this bipartition, i.e.,
\begin{equation}\label{Lem3:1}
\sigma_{\boldsymbol{\mathsf{A}}}=\sum_{i}p_i \sigma^i_S\ot \sigma^i_{\bar{S}}
\end{equation}
with $\sigma^i_S$ and $\sigma^i_{\bar{S}}$ being states
defined on subsystems $S$ and $\bar{S}$.
\end{lem}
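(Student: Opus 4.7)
The plan is to reduce the mixed-state statement to a structural claim about pure states in the symmetric subspace. Since $\sigma_{\boldsymbol{\mathsf{A}}}$ is not GME, it admits by definition a convex decomposition $\sigma_{\boldsymbol{\mathsf{A}}} = \sum_i p_i \proj{\psi_i}$ with $p_i>0$ and each $\ket{\psi_i}$ separable across some bipartition $T_i|\bar T_i$. Using the standard fact that every vector appearing with positive weight in a convex decomposition of a density operator must lie in its support, together with the hypothesis that $\sigma_{\boldsymbol{\mathsf{A}}}$ is supported on $P^S_{\mathrm{sym}}\mathcal H_S\ot P^{\bar S}_{\mathrm{sym}}\mathcal H_{\bar S}$, every $\ket{\psi_i}$ is forced into this symmetric subspace. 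The proof thus reduces to the pure-state assertion that any biseparable $\ket{\psi}\in P^S_{\mathrm{sym}}\mathcal H_S\ot P^{\bar S}_{\mathrm{sym}}\mathcal H_{\bar S}$ is separable across the particular bipartition $S|\bar S$; once this is shown, writing $\ket{\psi_i}=\ket{\phi_i}_S\ket{\chi_i}_{\bar S}$ immediately yields (\ref{Lem3:1}) with $\sigma_S^i=\proj{\phi_i}$ and $\sigma_{\bar S}^i=\proj{\chi_i}$.

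The key auxiliary tool I would establish first is: any density operator $\rho$ supported on the symmetric subspace of $(\mathbbm C^d)^{\ot n}$ that factorises as a tensor product $\rho_X\ot\rho_{X'}$ across some non-trivial bipartition of its $n$ particles is automatically a pure symmetric product $\proj{v}^{\ot n}$ for some $\ket v\in\mathbbm C^d$. I would prove this via the spectral decompositions $\rho_X=\sum_a\lambda_a\proj{v_a}$, $\rho_{X'}=\sum_b\mu_b\proj{w_b}$: every product eigenvector $\ket{v_a}\ket{w_b}$ of $\rho$ with positive weight must itself lie in the symmetric subspace, and a short argument using the symmetry under swaps mixing $X$ and $X'$ shows that any pure biseparable symmetric state is of the form $\ket v^{\ot n}$ for some single-particle $\ket v$. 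Imposing this for every product eigenvector, combined with the orthogonality of distinct $\ket{v_a}$'s (and of distinct $\ket{w_b}$'s), collapses both $\rho_X$ and $\rho_{X'}$ to rank one with a common $\ket v$.

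With this in hand, I would establish the pure-state assertion by case analysis on how $T_i$ sits with respect to $S|\bar S$. Decomposing $\ket{\psi}=\ket{\alpha}_{T_i}\ket{\beta}_{\bar T_i}$, the reduced state $\rho_S=\tr_{\bar S}\proj{\psi}$ is supported on $P^S_{\mathrm{sym}}\mathcal H_S$; whenever $T_i$ splits $S$ non-trivially, $\rho_S$ factorises as a tensor product across $T_i\cap S\,|\,\bar T_i\cap S$, and the auxiliary tool gives $\rho_S=\proj{v}^{\ot|S|}$ for some $\ket v$. Purity of the $S$-marginals of $\ket{\alpha}$ and $\ket{\beta}$ then implies $\ket\alpha=\ket v^{\ot|T_i\cap S|}\ket{\alpha'}$ and $\ket\beta=\ket v^{\ot|\bar T_i\cap S|}\ket{\beta'}$, and regrouping delivers $\ket\psi=\ket v^{\ot|S|}_S\ket\delta_{\bar S}$ with $\ket\delta=\ket{\alpha'}\ket{\beta'}$. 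If $T_i$ does not split $S$, either $T_i\in\{S,\bar S\}$ (trivial) or $T_i$ splits $\bar S$ non-trivially, in which case the same argument runs with $S$ and $\bar S$ exchanged. I expect the principal obstacle to be a clean proof of the auxiliary tool and the bookkeeping for degenerate configurations (e.g.\ $|S|=1$, or $T_i\subsetneq S$ with empty intersection on the $\bar S$ side), but everything ultimately reduces to the same symmetric-product argument.
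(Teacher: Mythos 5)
Your proposal is correct, and its engine is the same as the paper's: both arguments first use the support condition to force every pure product vector $\ket{\psi_i}=\ket{\alpha}_{T}\ket{\beta}_{\bar T}$ in the biseparable decomposition into $P^S_{\mathrm{sym}}\mathcal H_S\ot P^{\bar S}_{\mathrm{sym}}\mathcal H_{\bar S}$, and then exploit invariance under a swap of two parties that lie in the same symmetric group but on opposite sides of $T|\bar T$ to force identical pure single-particle factors. The difference is organizational. The paper applies the swap argument directly to $\ket{\alpha}\ket{\beta}$ via Schmidt decompositions with respect to $A_m|T\setminus A_m$ and $A_n|\bar T\setminus A_n$, concluding factorization pair by pair; you instead package the same mechanism into an auxiliary lemma (a mixed state on a symmetric subspace that factorizes across a nontrivial cut must equal $\proj{v}^{\ot n}$) and apply it to the marginal $\rho_S$ or $\rho_{\bar S}$, lifting back to $\ket{\psi_i}$ by purity of the marginals. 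Your route costs a little extra machinery (spectral decompositions, the orthogonality-collapse step, and the purity-lifting step, all of which you correctly identify and which do go through), but it buys a cleaner, reusable structural statement and a tidier case analysis on how $T|\bar T$ intersects $S|\bar S$; the paper's direct pure-state computation is more economical but more ad hoc. The only part you leave genuinely unproved --- that a permutation-symmetric pure state which is product across some nontrivial cut equals $\ket{v}^{\ot n}$ --- is exactly the Schmidt-plus-swap computation the paper writes out, so there is no gap in substance.
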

\begin{proof}As the proof is rather technical and lengthy, here we present
its sketch moving the details to Appendix.

The assumption that $\sigma_{\boldsymbol{\mathsf{A}}}$ is not
GME means that it admits the decomposition
(\ref{separable}) with $K=2$, i.e.,
%
$\sigma_{\bold{A}}=\sum_{T|\bar{T}\in\mathcal{S}_2}p_{T|\bar{T}}
\rho_{T|\bar{T}}$.
%
The sum goes over all bipartitions $T|\bar{T}$ of $\boldsymbol{\mathsf{A}}$ and
$\rho_{T|\bar{T}}$ is some state separable with respect to $T|\bar{T}$, i.e.,
%
$\rho_{T|\bar{T}}=\sum_{i}q_{T|\bar{T}}^i \proj{e_T^i}\ot
 \proj{f^i_{\bar{T}}}.$
%
%
Now, exploiting the assumption that the subspaces $S$ and
$\bar{S}$ of $\sigma_{\boldsymbol{\mathsf{A}}}$ are symmetric, one
can prove that each $\rho_{T|\bar{T}}$ with $T\neq S$ in the
decomposition of $\sigma_{\boldsymbol{\mathsf{A}}}$ is of the form
(\ref{Lem3:1}) ($\sigma_{S|\bar{S}}$ is already of this form). To
this aim, it is enough to realize that every pure state
$\ket{e_T^i}\ket{f_{\bar{T}}^i}$ must obey
$P_{\mathrm{sym}}^{S(\bar{S})}\ket{e_T^i}\ket{f_{\bar{T}}^i}=\ket{e_T^i}\ket
{f_{ \bar{T}}^i} $. This, after some algebra, implies that it must
also be product with respect to $S|\bar{S}$.
\end{proof}

We are now in position to prove our main result. A straightforward
corollary of Lemma \ref{Lem2} is that any $N$-partite state
$\sigma_{\boldsymbol{\mathsf{A}}}$, which does not admit the form
(\ref{Lem3:1}), i.e., is entangled across some cut $S|\bar{S}$,
and whose subsystems $S$ and $\bar{S}$ are symmetric, is GME. Take
now a bipartite entangled state $\rho_{AB}\in
\mathcal{B}(\mathbbm{C}^d\otimes \mathbbm{C}^d)$ and the quantum
channels $\Lambda_{A\to S}:\mathcal{B}(\mathbbm{C}^d)\to
\mathcal{B}(\mathcal{S}_{L,d'})$ and
$\Lambda_{B\to\bar{S}}:\mathcal{B}(\mathbbm{C}^d)\to
\mathcal{B}(\mathcal{S}_{N-L,d'})$ that are invertible in the
sense that for both of them there exists a channel
$\widetilde{\Lambda}$ such that $\widetilde{\Lambda}\circ\Lambda$
is the identity map on $\mathcal{B}(\mathbbm{C}^d)$. Note that now
these channels output states acting on the corresponding $L$ and
$(N-L)$-partite symmetric subspaces. Clearly, the $N$-partite
state $\sigma_{\boldsymbol{\mathsf{A}}}$ resulting from the
application of $\Lambda_{A\to S}$ and $\Lambda_{B\to \bar{S}}$ to
$\rho_{AB}$ is symmetric on the subspaces $S$ and $\bar{S}$, and,
as $\rho_{AB}$ is entangled, must be GME; if
$\sigma_{\boldsymbol{\mathsf{A}}}$ is not GME, then, as the two
channels are invertible, $\rho_{AB}$ must be separable. If we
further assume that $\rho_{AB}$ is local, the resulting state
$\sigma_{\boldsymbol{\mathsf{A}}}$ will have, according to Lemma
\ref{Lem1}, a bilocal model, proving the desired.

As a result we have a general method for constructing bilocal genuinely
entangled $N$-partite states with an arbitrary $N$.

\textit{Applications.} Let us now see how
our method works in practice. We
consider for this purpose two paradigmatic classes of states: the isotropic and
the Werner states \cite{isotropic,Werner}. The quantum channels are chosen to be
$\Lambda_{A\to S}(\cdot)=V_L(\cdot)V^{\dagger}_L$ and
$\Lambda_{B\to \bar{S}}(\cdot)=V_{N-L}(\cdot)V^{\dagger}_{N-L}$
with $V_M:\mathbbm{C}^d\to \mathcal{S}_{M,d}$ being an isometry
defined through $V_M\ket{i}=\ket{i}^{\ot M}$ for any element of
the standard basis in $\mathbbm{C}^d$.

Let us begin with the two-qudit isotropic states
which are given by
$\rho_{\mathrm{iso}}(p)=p\proj{\psi_{d}^+}+(1-p)\mathbbm{1}_{d^2}/d^2$, where
$\ket{\psi_d^+}=(1/\sqrt{d})\sum_{i=0}^{d-1}\ket{ii}$ is the
maximally entangled state. Application of the isometries to
$\rho_{\mathrm{iso}}(p)$ leads us to the mixture of the well-known
GHZ state of $N$ qudits
$\ket{\mathrm{GHZ}_{N,d}}=(1/\sqrt{d})\sum_{i=0}^{d-1}
\ket{i}^{\otimes N}$ and some coloured noise:
\begin{equation}
\label{genghz}
 \sigma_{\boldsymbol{\mathsf{A}}}(p)=p\proj{\mathrm{GHZ}_{N,d}}+(1-p)\frac{
\mathcal{P}_{L,d}\ot\mathcal{P}_{N-L,d}}{d^2},
\end{equation}
where $\mathcal{P}_{L,d}=\sum_{i=0}^{d-1}\proj{i}^{\ot L}$ with
$1\leq L\leq N-1$. Now, as the isotropic states are local for
$p\leq (3d-1)(d-1)^{d-1}/d^d(d+1)$ \cite{Almeida}, it stems from
Lemma \ref{Lem1} that for the same range of $p$ and
$L=1,\ldots,N-1$, the state $\sigma_{\boldsymbol{\mathsf{A}}}(p)$
is bilocal with respect to the bipartition $A_1\ldots
A_L|A_{L+1}\ldots A_N$. Further, isometric
channels are always invertible ($V_M^{\dagger}V_M=\mathbbm{1}_d$)
and thus, as required, preserve entanglement. Hence, the states
$\sigma_{\boldsymbol{\mathsf{A}}}(p)$ are GME for the same range of $p$ as
$\rho_{\mathrm{iso}}(p)$ are entangled, i.e., for
$p>1/(d+1)$. Concluding, the states (\ref{genghz}) constitute our
first example of GME states with a bilocal model for any $N$.

Let us now consider the Werner states which read
$\rho_{W} (p)=p[2/d(d-1)]P_{\mathrm{asym}} +(1-p)
\mathbbm{1}_{d^2}/d^2$,
where $P_{\mathrm{asym}}$ stands for the projector onto the
antisymmetric subspace of $\mathbbm{C}^d\ot\mathbbm{C}^d$.
Applying the isometries defined above to $\rho_{W}(p)$,
one constructs the following $N$-qudit states
\begin{equation}
 \sigma'_{\boldsymbol{\mathsf{A}}}(p)=p\frac{2\widetilde{P}_{L,d}}{d(d-1)}+(1-p)
\frac{\mathcal{P}_{L,d}\ot \mathcal{P}_{N-L,d}}{d^2},
\end{equation}
with $L=1,\ldots,N-1$, where
%
$\widetilde{P}_{L,d}=\sum_{i<j}\proj{\psi_{ij}}$
%
with $\ket{\psi_{ij}}=(1/\sqrt{2})(\ket{i}^{\ot L}\ket{j}^{\ot
(N-L)}+\ket{j}^{\ot L}\ket{i}^{\ot (N-L)})$. The Werner states
have a local model for GMs for $p\leq (3d-1)(d-1)^{d-1}/d^d(d+1)$
\cite{Barrett} and so do the states
$\sigma'_{\boldsymbol{\mathsf{A}}}(p)$ for any $L$. Moreover,
$\rho_{W}(p)$ are entangled for $p>1/(d+1)$, thus
$\sigma'_{\boldsymbol{\mathsf{A}}}(p)$ are GME for the same range
of $p$.

\textit{Generalizing the construction.} Interestingly, our
construction can be generalized to the case when
the initial bipartite state is replaced by a
multipartite genuinely entangled state with a local model for GMs.
To be precise, let us first consider a $K$-partite state
$\rho_{A_1\ldots A_K}$ acting on $\mathcal{H}_{K,d}$ and a
collection of $K$ quantum channels $\Lambda_{A_k\to
S_k}:\mathcal{B}(\mathbbm{C}^d)\to \mathcal{B}((\mathbbm{C}^{d'})^{\ot L_k})$
with $L_i\geq 1$ such that $L_1+\ldots+L_K=N>K$. By definition,
each channel ``expands'' a
single-particle Hilbert space to an $L_k$-partite one of local
dimension $d'$ corresponding to parties from the group
%
 $S_k=\{A_{1+\sum_{i=1}^kL_{i-1}},\ldots,A_{\sum_{i=1}^{k}L_i}\},$
%
with $k=1,\ldots,K$ and $L_0=0$.
By applying these channels to subsystems of $\rho_{A_1\ldots A_K}$, one obtains
an $N$-partite state
\begin{equation}
 \widetilde{\sigma}_{\boldsymbol{\mathsf{A}}}=(\Lambda_{A_1\to
S_1}\ot\ldots\ot\Lambda_{A_K\to S_K})(\rho_{ A_1\ldots A_K})
\end{equation}
 acting on $\mathcal{H}_{N,d'}$ (the subsystem
$A_1$ of $\rho_{A_1\ldots A_K}$ is mapped to $S_1=A_1\ldots
A_{L_1}$ of $\widetilde{\sigma}_{\boldsymbol{\mathsf{A}}}$ etc.).
Now, following the same arguments as in the proof of Lemma
\ref{Lem1} (see Appendix), one shows that if $\rho_{A_1\ldots
A_K}$ has a fully local model for GMs, then
$\widetilde{\sigma}_{\bold{A}}$ has a $K$-local model for GMs with
respect to the $K$-partition determined by the groups $S_k$.
Furthermore, generalizing Lemma \ref{Lem2} (see Appendix), one
finds that with a proper choice of the channels $\Lambda_{A_k\to
S_k}$ we can guarantee that the resulting state is GME. Thus, any
genuinely entangled $K$-partite state admitting a fully local
model gives rise to a genuinely entangled $N$-partite state, with
any $N>K$, having $K$-local model.

Note that this generalization, when applied to the existing
example of a tripartite GME state with a local model for
generalized measurements \cite{Tulio}, implies the existence of
GME $N$-partite states with three-local models for any $N$.

Finally, let us comment on the operational definitions of
$K$-locality given in Refs.
\cite{gallego,bancal}. As shown there, a definition of
$K$-locality which is operationally consistent looks like
(\ref{Kmodel}) with the additional constraint that all the
probability distributions
$p_k(\boldsymbol{\mathsf{a}}_{S_k}
|\boldsymbol{\mathsf{M}}_{S_k},\lambda)$ satisfy the no-signalling
principle. Importantly, this condition can be easily met in our
construction. Namely, it is enough to extend the part of a state
on which the response function in a local model is quantum, i.e.,
given by the Born rule, as the resulting response on this part
will be automatically no-signalling. The isotropic
and the Werner states do have
local models with one of the response functions being quantum,
so the states $\sigma_{\boldsymbol{\mathsf{A}}}$ and
$\sigma_{\boldsymbol{\mathsf{A}}}'$ constructed
above with $L=1$ have bilocal models in which the
response functions corresponding to $N-1$ parties are
no-signalling. Thus, the inequivalence between entanglement and
nonlocality holds even when using the operational definitions of
$K$-locality.

\textit{Conclusions and discussion.}
We have provided a general method of deriving from $N$-party GME states with a
$K$-local models $N'$-party GME states with the same type of locality for any
$N'>N$. Our construction implies then that entanglement and nonlocality are
inequivalent for any number of parties, even if the operational definitions of
multipartite locality are considered.

The most interesting open problem following from our work is to understand
the extent to which the inequivalence between entanglement and nonlocality
holds. With the current state of knowledge, our
results show that there exist GME $N$-party states that have a 2-
and even 3-local model. Now, what is the maximum value of $K$ such
that there exist $N$-party states with a $K$-local model for any
$N$? In particular, are there genuinely entangled $N$-party states with a fully
local model? This happens to be the case for $N=2,3$, but no results are
known beyond these two cases. A related question is whether there
exists some threshold value of $N$ above which the GME states are too entangled
to allow for a fully local model.

We then note that a number of different operationally meaningful
nonlocality scenarios beyond the one considered in the present work have been
introduced. These are: the network approach \cite{Cavalcanti}, Bell scenarios defined on
copies of a state \cite{Palazuelos} (see also Ref. \cite{Liang}) or sequential
measurements
\cite{Popescu,Hidden-Gisin,Hirsch}. In these more general approaches, states that
are local in the standard setup may display nonlocal properties. Nevertheless,
it remains open whether in such scenarios the equivalence between
nonlocality and entanglement holds. It would be thus of interest to verify
whether nonlocality of the states introduced here could be revealed in one of
these more general setups.

Let us conclude by pointing out that our construction also implies
that genuine multipartite entanglement is inequivalent to steering---another
intriguing phenomenon of quantum information theory \cite{steering}. That is, by
applying it to a bipartite state that has a local model with
quantum response function, the construction produces a GME state which is
unsteerable (in at least one direction) across the same bipartition with respect
to which it is bilocal.

\textit{Acknowledgments.} This work is supported by the EU ERC CoG
QITBOX and AdG OSYRIS, EU project SIQS, Spanish project Chist-Era DIQIP, and the John
Templeton Foundation. R. A. also acknowledges the Spanish MINECO
for the support through the Juan de la Cierva program.

\section{appendices}

Here we state and prove generalizations of Lemma 1 and 2 from the
main text. We also formulate our main result as a theorem.

We start with a generalization of Lemma 1. For this purpose,
consider a $K$-partite state $\rho_{A_1\ldots A_K}$ acting on
$\mathcal{H}_{K,d}=(\mathbbm{C}^d)^{\ot K}$ and a collection of
$K$ quantum channels
\begin{equation}\label{StanBorys}
 \Lambda_{A_k\to S_k}:\mathcal{B}(\mathbbm{C}^d)\to
\mathcal{B}((\mathbbm{C}^{d'})^{\ot L_i})\qquad (i=1,\ldots,K),
\end{equation}
such that $L_i\geq 1$ for any $i$ and $L_1+\ldots+L_K=N>K$. Each
channel $\Lambda_{A_k\to S_k}$ maps operators acting on the
Hilbert space $\mathbbm{C}^d$ corresponding to the party $A_k$ to
operators acting on $L_i$-partite Hilbert space
$(\mathbbm{C}^{d'})^{\ot L_i}$ corresponding to the group of
parties denoted
\begin{equation}\label{partition}
 S_k=\{A_{1+\sum_{i=1}^kL_{i-1}},\ldots,A_{\sum_{i=1}^{k}L_i}\}.
\end{equation}
with $k=1,\ldots,K$ and $L_0=0$. Notice that in the latter Hilbert
space the local dimension $d'$ may be different than $d$.
Consequently, application of these channels to the subsystems of
$\rho_{A_1\ldots A_K}$ gives rise to an $N$-partite state
\begin{eqnarray}\label{StanSigma}
\sigma_{\boldsymbol{\mathsf{A}}}=(\Lambda_{ A_1\to
S_1}\ot\ldots\ot \Lambda_{A_K\to S_K})(\rho_{A_1\ldots A_K})
\end{eqnarray}
that now acts on a larger $N$-partite Hilbert space
$\mathcal{H}_{N,d'}$. This mapping induces naturally the
$K$-partition of $\boldsymbol{\mathsf{A}}$ determined by the
groups $S_k$ given in (\ref{partition}).

We are now ready to generalize Lemma 1 from the main text.
\begin{lem}\label{LemA1}
 If a $K$-partite state $\rho_{A_1\ldots A_K}$ acting on $\mathcal{H}_{K,d}$ has
a fully local model for GMs, then the $N$-partite state
$\sigma_{\boldsymbol{\mathsf{A}}}$ in Eq. (\ref{StanSigma}) with
$L_i\geq 1$ and $L_1+\ldots +L_K=N>K$ has a $K$-local model for
GMs with respect to the $K$-partition defined by
(\ref{partition}).
\end{lem}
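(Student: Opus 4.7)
The plan is to mimic the two-party argument of Lemma \ref{Lem1} with the tensor structure extended from two groups to $K$. Given arbitrary measurements $\{M^{(j)}_{a_j}\}$ performed on the $N$ parties of $\sigma_{\boldsymbol{\mathsf{A}}}$, for each group $S_k$ I would introduce the effective POVM on $\mathbbm{C}^d$ obtained via the dual of the $k$-th channel,
\begin{equation}
\bar{M}^{(k)}_{\boldsymbol{\mathsf{a}}_{S_k}}=\Lambda^{\dagger}_{A_k\to S_k}\!\left(\bigotimes_{j\in S_k}M^{(j)}_{a_j}\right).
\end{equation}
These effective measurements are precisely the objects to be fed into the fully local model of the $K$-partite state $\rho_{A_1\ldots A_K}$.

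First, I would check that $\{\bar{M}^{(k)}_{\boldsymbol{\mathsf{a}}_{S_k}}\}_{\boldsymbol{\mathsf{a}}_{S_k}}$ is a legitimate generalized measurement on $\mathbbm{C}^d$: positivity follows because the dual of a completely positive map is positive, and the completeness relation $\sum_{\boldsymbol{\mathsf{a}}_{S_k}}\bar{M}^{(k)}_{\boldsymbol{\mathsf{a}}_{S_k}}=\mathbbm{1}_d$ follows from the unitality of $\Lambda^{\dagger}_{A_k\to S_k}$ (inherited from trace-preservation of $\Lambda_{A_k\to S_k}$) combined with $\sum_{a_j}M^{(j)}_{a_j}=\mathbbm{1}_{d'}$. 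Next I would expand the outcome probability $p(\boldsymbol{\mathsf{a}}|\boldsymbol{\mathsf{M}})=\Tr[(\bigotimes_{j=1}^N M^{(j)}_{a_j})\sigma_{\boldsymbol{\mathsf{A}}}]$ by substituting (\ref{StanSigma}) and applying the dual-map identity $\Tr[X\Lambda(Y)]=\Tr[\Lambda^{\dagger}(X)Y]$ on each of the $K$ tensor factors separately. Since the channels in (\ref{StanSigma}) act independently on the corresponding factors of $\rho_{A_1\ldots A_K}$, this collapses to
\begin{equation}
p(\boldsymbol{\mathsf{a}}|\boldsymbol{\mathsf{M}})=\Tr\!\left[\Big(\bigotimes_{k=1}^K\bar{M}^{(k)}_{\boldsymbol{\mathsf{a}}_{S_k}}\Big)\rho_{A_1\ldots A_K}\right].
\end{equation}
Invoking the assumed fully local model for $\rho_{A_1\ldots A_K}$ with these effective measurements then yields
\begin{equation}
p(\boldsymbol{\mathsf{a}}|\boldsymbol{\mathsf{M}})=\int\!\mathrm{d}\lambda\,\omega(\lambda)\prod_{k=1}^K p_{\rho}(\boldsymbol{\mathsf{a}}_{S_k}|\bar{M}^{(k)},\lambda).
\end{equation}
Setting $p_{\sigma}(\boldsymbol{\mathsf{a}}_{S_k}|\boldsymbol{\mathsf{M}}_{S_k},\lambda):=p_{\rho}(\boldsymbol{\mathsf{a}}_{S_k}|\bar{M}^{(k)},\lambda)$ puts the right-hand side in exactly the form (\ref{Kmodel}) of a $K$-local model on the partition $S_1|\ldots|S_K$ (the outer mixture over partitions being trivially concentrated on this one), which is the claim.

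Conceptually nothing new beyond Lemma \ref{Lem1} is needed: the entire content is packaged into the choice of dual measurements. The only point requiring care is the algebraic bookkeeping when expanding the dual of the tensor product $\Lambda_{A_1\to S_1}\ot\ldots\ot\Lambda_{A_K\to S_K}$ and correctly re-indexing the $N$ single-party outcomes $a_1,\ldots,a_N$ into the $K$ collective labels $\boldsymbol{\mathsf{a}}_{S_k}$; this is routine and is the only ``obstacle,'' purely notational in nature.
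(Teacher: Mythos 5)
Your proposal is correct and follows essentially the same route as the paper's own proof: you define the effective POVMs on $\mathbbm{C}^d$ via the dual maps of the channels, verify positivity and completeness from positivity and unitality of the duals, pull the measurements back through the tensor product of channels using $\Tr[X\Lambda(Y)]=\Tr[\Lambda^{\dagger}(X)Y]$ factor by factor, and feed the resulting effective measurements into the fully local model of $\rho_{A_1\ldots A_K}$ to define the response functions for $\sigma_{\boldsymbol{\mathsf{A}}}$. The only (harmless) deviation is notational: the paper writes the dual map as $\Lambda^{\dagger}_{S_k\to A_k}$ to emphasize that it acts in the reverse direction, whereas you keep the subscript $A_k\to S_k$.
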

%
%
%
\begin{proof}
We start by noting that the fact that $\rho_{A_1\ldots A_K}$ has a
fully local model for generalized measurements means that the
probabilities of obtaining results $a_1,\ldots,a_K$ upon
performing measurements $M_i=\{M_{a_i}^{(i)}\}$ $(i=1,\ldots,K)$
take the following form [cf. Eq. (2) in the main text]
\begin{eqnarray}\label{model}
&& \hspace{-0.5cm}p(a_1,\ldots,a_K|M_1,\ldots,M_K)\nonumber\\
&& =\Tr\left[\left(M^{(1)}_{a_1}
\ot\ldots \ot  M_{a_K}^{(K)}\right)\rho_{A_1\ldots A_K}\right]\nonumber\\
&& =\int_{\Omega}\mathrm{d}
\lambda\,\omega(\lambda)P_{\rho}(a_1|M_{a_1}^{(1)},\lambda)\ldots
P_{\rho}(a_K|M_{a_K}^{(K)},\lambda),\nonumber\\
\end{eqnarray}
where $\Omega$ denotes the set over which the classical
information $\lambda$ is distributed with probability distribution
$\omega$, and we have used the subscript $\rho$ to emphasize that
the probabilities correspond to $\rho_{A_1\ldots A_K}$. Exploiting
(\ref{model}), we can now construct a local model for
$\sigma_{\boldsymbol{\mathsf{A}}}$ with respect to the
$K$-partition defined by the groups $S_k$ given in Eq.
(\ref{partition}). To this end, let us assume that the parties
$A_1,\ldots,A_N$ perform measurements
$\widetilde{M}_{i}=\{\widetilde{M}_{a_i}^{(i)}\}$ $(i=1,\ldots,N)$
on their share of the state $\sigma_{\boldsymbol{\mathsf{A}}}$.
Then, we define the following operators
\begin{eqnarray}\label{ops1}
\bar{M}_{\boldsymbol{\mathsf{a}}_{S_1}}^{(1)}&\!\!=\!\!&\Lambda_{S_1\to
A_1}^{ \dagger } \left(\widetilde{ M}^{(1)}_{a_1}\ot\ldots \ot
\widetilde{M}^{(L_1)}_{a_{L_1}}\right)\nonumber\\
\bar{M}^{(2)}_{\boldsymbol{\mathsf{a}}_{S_2}}&\!\!=\!\!&\Lambda_{S_2\to
A_2}^{ \dagger } \left(\widetilde{M}^{ (L_1+1)}_{a_{L_1+1}} \ot
\ldots \ot
\widetilde{M}^{(L_1+L_2)}_{a_{L_1+L_2}}\right)\nonumber\\
&\vdots&\nonumber\\
\bar{M}^{(K)}_{\boldsymbol{\mathsf{a}}_{S_K}}&\!\!=\!\!&\Lambda_{S_K\to
A_K}^{ \dagger } \left(\widetilde{M}^{
(L_1+\ldots+L_{K-1}+1)}_{a_{L_1+\ldots+L_{K-1}+1}}\ot \ldots \ot
\widetilde{M}^{(N)}_{a_N}\right)\nonumber\\
\end{eqnarray}
acting on $\mathbbm{C}^{d}$ and indexed by the collections of
outcomes $\boldsymbol{\mathsf{a}}_{S_k}$. Here,
$\Lambda^{\dagger}_{S_k\to
A_k}:\mathcal{B}((\mathbbm{C}^{d'})^{\ot L_k})\to
\mathcal{B}(\mathbbm{C}^d)$ stands for the dual map of
$\Lambda_{A_k\to S_k}$. Due to the fact that the dual map of a
quantum channel is unital, i.e., it preserves the identity, and
positive, it is fairly easy to see that for each $k=1,\ldots,K$,
the set of operators
$\bar{M}_k=\{\bar{M}^{(k)}_{\boldsymbol{\mathsf{a}}_{S_k}}\}_{
\boldsymbol{\mathsf{a}}_{S_k}} $ forms a generalized measurement,
that is, for any $k=1,\ldots,K$,
$\bar{M}^{(k)}_{\boldsymbol{\mathsf{a}}_{S_k}}\geq 0$ for any
$\boldsymbol{\mathsf{a}}_{S_k}$ and
\begin{equation}
 \sum_{\boldsymbol{\mathsf{a}}_{S_k}}\bar{M}^{(k)}_{\boldsymbol{\mathsf{a}}_
{S_k}}=\mathbbm{1}_d.
\end{equation}
Using the measurements $\bar{M}_k$, let us now define the response
functions corresponding to the sets of parties $S_k$ as
\begin{equation}\label{expr1}
p_{\sigma}(\boldsymbol{\mathsf{a}}_{S_k}|\boldsymbol{\mathsf{M}}_{
S_k},\lambda)=p_{\rho}
(\boldsymbol{\mathsf{a}}_{S_k}|\bar{M}_k,\lambda)
\end{equation}
with $k=1,\ldots,K$. We now have
\begin{eqnarray*}
&&p(\boldsymbol{\mathsf{a}}|\boldsymbol{\widetilde{\mathsf{M}}})=\Tr\left[
\left(\widetilde{M}^{(1)}_{a_1}\ot\ldots\ot
\widetilde{M}^{(N)}_{a_N}\right)\sigma_{\boldsymbol{\mathsf{A}}}\right]
\nonumber\\
&&=\Tr\left[\left(\widetilde{M}^{(1)}_{a_1}\ot\ldots\ot
\widetilde{M}^{(N)}_{a_N} \right)
\bigotimes_{k=1}^K\Lambda_{A_k\to S_k}(\rho_{A_1\ldots
A_K})\right]
\end{eqnarray*}
where we have employed the definition of
$\sigma_{\boldsymbol{\mathsf{A}}}$. This, with the aid of Eqs.
(\ref{ops1}) can be further rewritten as
\begin{equation}
p(\boldsymbol{\mathsf{a}}|\boldsymbol{\widetilde{\mathsf{M}}})=\Tr\left[
\left(\bar{M}_{\boldsymbol{\mathsf{a}}_{S_1}}^{(1)}\ot\ldots\ot
\bar{M}_{\boldsymbol{\mathsf{a}}_{S_K}}^{(K)}\right)\rho_{A_1\ldots
A_K}\right].
\end{equation}
Since the state $\rho_{A_1\ldots A_K}$ is fully local, we finally
obtain
\begin{eqnarray}
p(\boldsymbol{\mathsf{a}}|\boldsymbol{\widetilde{\mathsf{M}}})&=&\int_{\Omega}
\mathrm{d}
\lambda\,\omega(\lambda)\prod_{k=1}^{K}P_{\rho}(\boldsymbol{\mathsf{a}}_{S_k}
|\bar{M}_{k},\lambda)\nonumber\\
&=&\int_{\Omega}\mathrm{d}
\lambda\,\omega(\lambda)\prod_{k=1}^{K}P_{\sigma}(\boldsymbol{\mathsf{a}}_{S_k}
|\boldsymbol{\mathsf{M}}_{S_k},\lambda),
\end{eqnarray}
where the last equality stems from the definitions of the response
functions for the state $\sigma_{\boldsymbol{\mathsf{A}}}$ given
in Eq. (\ref{expr1}). As a result, one sees that
$\sigma_{\boldsymbol{\mathsf{A}}}$ has a $K$-local model for
generalized measurement with respect to the $K$-partition defined
by the groups $S_k$ given in (\ref{partition}). Notice that
$\Omega$ and $\omega$ are the same as in the local model for
$\rho_{A_1\ldots A_K}$.
%
%
%
%
\end{proof}
%
Let us remark that it is fairly easy to see that Lemma \ref{LemA1}
can be further generalized to the case when the initial Hilbert
spaces $\mathcal{H}_{K,d}$ (and analogously the final one
$\mathcal{H}_{N,d'}$) have different local dimensions.

We can now move to the discussion on genuine multipartite
entanglement. For this purpose let us consider again $N$ parties
$A_1,\ldots,A_N$ sharing some $N$-partite state
$\rho_{\boldsymbol{\mathsf{A}}}$ acting on
$\mathcal{H}_{N,d}=(\mathbbm{C}^d)^{\ot N}$. Let us then consider
some $K$-partition of the parties $A_1,\ldots,A_N$, into $K$
pairwise disjoint sets $S_k$ such that they together contain all
the parties. Finally, by $P_\mathrm{sym}^{X}$ we denote the
projector onto the symmetric subspace of the Hilbert space
corresponding to the subsystem $X$.

One then proves the following generalization of Lemma 2 from the
main text.

\begin{lem}\label{LemA2}
Let $\rho_{\boldsymbol{\mathsf{A}}}$ be an $N$-partite state
acting on $\mathcal{H}_{N,d}$ such that with respect to some
$K$-partition given by the groups $S_k$, its subsystems
corresponding to $S_k$ are defined on symmetric subspaces, i.e.,
\begin{equation}
 P_{\mathrm{sym}}^{S_k}\rho_{\boldsymbol{\mathsf{A}}}
P_{\mathrm{sym}}^{S_k}=\rho_{\boldsymbol{\mathsf{A}}}
\end{equation}
with $k=1,\ldots,K$. If $\rho$ is not GME, then it takes the
biseparable form
\begin{equation}\label{biseparable}
 \rho_{\boldsymbol{\mathsf{A}}}=\sum_{\mathcal{T}|\bar{\mathcal{T}}}p_{\mathcal{
T}|\bar{\mathcal{T}}}\rho_{ \mathcal{T|\bar{\mathcal{T}}}}
%
\end{equation}
where $p_{\mathcal{T}|\bar{\mathcal{T}}}$ is some probability
distribution and every $\rho_{\mathcal{T}|\bar{\mathcal{T}}}$  is
a state separable across the bipartition
$\mathcal{T}|\bar{\mathcal{T}}$ with $\mathcal{T}$ and
$\bar{\mathcal{T}}$ being unions of $S_k$.
%
\end{lem}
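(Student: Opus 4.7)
The plan is to generalize the proof sketch of Lemma 2 from the main text to the $K$-group symmetric setting. I would start from the non-GME assumption to obtain a biseparable decomposition $\rho_{\boldsymbol{\mathsf{A}}}=\sum_{T|\bar{T}}p_{T|\bar{T}}\rho_{T|\bar{T}}$, with the sum running over all bipartitions $T|\bar{T}$ of the $N$ parties and $\rho_{T|\bar{T}}=\sum_i q^i_{T|\bar{T}}\proj{e^i_T}\ot\proj{f^i_{\bar{T}}}$. Because the symmetry hypotheses $P_{\mathrm{sym}}^{S_k}\rho_{\boldsymbol{\mathsf{A}}}P_{\mathrm{sym}}^{S_k}=\rho_{\boldsymbol{\mathsf{A}}}$ for all $k$ are equivalent to $\rho_{\boldsymbol{\mathsf{A}}}$ being supported on $\bigotimes_{k=1}^K\mathcal{S}_{|S_k|,d}$, every pure component in any convex decomposition necessarily lies within this subspace.

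The crucial step is to argue that for every $T|\bar{T}$ appearing in the sum and every group $S_k$ split into non-empty $S_k^T:=S_k\cap T$ and $S_k^{\bar{T}}:=S_k\cap\bar{T}$, each pure product $\ket{e^i_T}\ket{f^i_{\bar{T}}}$ in fact factorizes across $S_k|\bar{S}_k$ with the $S_k$-part of the form $\ket{u}^{\ot|S_k|}$. To see this, note that the reduced state on $S_k$ is the tensor product $\rho^{(e)}_{S_k^T}\ot\rho^{(f)}_{S_k^{\bar{T}}}$ of the two marginals and must be supported on $\mathcal{S}_{|S_k|,d}$. The argument then rests on the auxiliary lemma: whenever $\ket{\phi}\in(\mathbbm{C}^d)^{\ot a}$ and $\ket{\psi}\in(\mathbbm{C}^d)^{\ot b}$ are non-zero and $\ket{\phi}\ot\ket{\psi}\in\mathcal{S}_{a+b,d}$, one has $\ket{\phi}\propto\ket{u}^{\ot a}$ and $\ket{\psi}\propto\ket{u}^{\ot b}$ for a common single-particle vector $\ket{u}$. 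I would prove this by first using swaps internal to each group to force $\ket{\phi}\in\mathcal{S}_{a,d}$ and $\ket{\psi}\in\mathcal{S}_{b,d}$, and then exploiting invariance of $\ket{\phi}\ot\ket{\psi}$ under a swap between the two groups to deduce the amplitude identity $f(i,\vec{i}\,')g(j,\vec{j}\,')=f(j,\vec{i}\,')g(i,\vec{j}\,')$; this yields the factorizations $\ket{\phi}=\ket{u}\ot\ket{\phi'}$ and $\ket{\psi}=\ket{u}\ot\ket{\psi'}$ with the same $\ket{u}$, and an easy iteration on $\ket{\phi'}$ and $\ket{\psi'}$ finishes the proof.

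Applied to every eigenvector of $\rho^{(e)}_{S_k^T}\ot\rho^{(f)}_{S_k^{\bar{T}}}$, the auxiliary lemma forces both marginals to be pure and equal to $(\proj{u})^{\ot|S_k^T|}$ on $S_k^T$ and $(\proj{u})^{\ot|S_k^{\bar{T}}|}$ on $S_k^{\bar{T}}$ for a common $\ket{u}$. Consequently $\ket{e^i_T}\ket{f^i_{\bar{T}}}$ splits off a $\ket{u}^{\ot|S_k|}$ factor on $S_k$ and is therefore product across the enlarged bipartition obtained by reassigning all of $S_k$ to one side. Iterating over every split group $S_k$ (the reassignments do not interfere, since the $S_k$ are disjoint and moving one of them alters neither the split structure of, nor the symmetric-subspace constraint on, any other $S_{k'}$) produces an $S_k$-compatible bipartition $\mathcal{T}|\bar{\mathcal{T}}$ across which the given pure component factorizes. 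Regrouping all such pure product states by their final $S_k$-compatible bipartition and collecting weights yields the decomposition~(\ref{biseparable}).

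The main obstacle is establishing the auxiliary lemma on product states lying in the symmetric subspace, which is the only genuinely new ingredient beyond the $K=2$ case of Lemma 2; once it is available, the rest of the argument reduces to a bookkeeping iteration over the split groups together with the observation that pure components of a convex decomposition of $\rho_{\boldsymbol{\mathsf{A}}}$ can be chosen inside its support.
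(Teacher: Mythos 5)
Your proof is correct and follows essentially the same route as the paper's: both rest on the fact that every pure product component $\ket{e_T^i}\ket{f_{\bar{T}}^i}$ inherits the symmetric-subspace constraint on each group $S_k$, and that invariance under a swap exchanging a particle of $S_k\cap T$ with one of $S_k\cap\bar{T}$ forces the component to factor out a common single-particle vector on all of $S_k$, so the bipartition can be coarsened to a union of the $S_k$. The paper implements this mechanism via single-particle Schmidt decompositions and a pairwise swap, whereas you package it as a standalone lemma on product vectors in the symmetric subspace applied to the $S_k$-marginals; this is a difference of presentation rather than of substance.
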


\begin{proof}From the fact that $\rho_{\boldsymbol{\mathsf{A}}}$ is not GME it
follows that it can be written as [cf. Eq. (1) in the main text]
\begin{equation}\label{sepform}
\rho_{\boldsymbol{\mathsf{A}}}=\sum_{T|\bar{T}\in\mathcal{S}_2}p'_{T|\bar{T}}
\varrho_{T|\bar{T}},\qquad p'_{T|\bar{T}}\geq 0, \qquad
\sum_{T|\bar{T}}p'_{T|\bar{T}}=1,
\end{equation}
where the sum goes over all bipartitions $T|\bar{T}$ and
$\varrho_{T|\bar{T}}$ is some state that is separable with respect
to the bipartition $T|\bar{T}$, i.e., it admits the form
\begin{equation}\label{Lem3:4}
 \varrho_{T|\bar{T}}=\sum_{i}q_{T|\bar{T}}^i \proj{e_T^i}\ot
\proj{f^i_{\bar{T}}},
\end{equation}
with $q_{T|\bar{T}}^i$ being some probability distribution for any
$T|\bar{T}$, and $\ket{e_T^i}$ and $\ket{f_T^i}$ denoting some
pure states from the Hilbert spaces corresponding to subsystems
$T$ and $\bar{T}$.

We will now prove, using the assumption that the state
$\rho_{\boldsymbol{\mathsf{A}}}$ is symmetric on the subsystems
$S_k$, that any $T$ and $\bar{T}$ appearing in Eq. (\ref{Lem3:4})
must be a union of the sets $S_k$. To this end, it is enough to
show for each $T$ that every pure state
$\ket{e_T^i}\ket{f_{\bar{T}}^i}$ appearing in (\ref{Lem3:4}) must
also be product across a bipartition
$\mathcal{T}|\bar{\mathcal{T}}$ in which $\mathcal{T}$ and
$\bar{\mathcal{T}}$ are unions of the sets $S_k$.

We first notice that the assumption that the subsystems $S_k$ of
$\rho_{\boldsymbol{\mathsf{A}}}$ are defined on the corresponding
symmetric subspaces implies that any pure state
$\ket{e_T^i}\ket{f_{\bar{T}}^i}$ appearing in (\ref{Lem3:4}) for
any bipartition $T$ must obey the following set of conditions
\begin{equation}
P_{\mathrm{sym}}^{S_k}\ket{e_T^i}\ket{f_{\bar{T}}^i}=\ket{e_T^i}\ket{f_{\bar{T}
} ^i }
\end{equation}
with $k=1,\ldots,K$. This in particular means that for any pair of
parties $A_m$ and $A_n$ belonging to the same set $S_k$,
\begin{equation}\label{condition}
 V_{A_mA_n}\ket{e_T^i}\ket{f_{\bar{T}}^i}=\ket{e_T^i}\ket{f_{\bar{T}}^i},
\end{equation}
where $V$ is the swap operator defined through the condition
$V\ket{\phi}\ket{\psi}=\ket{\psi}\ket{\phi}$ for any pair of
vectors $\ket{\psi},\ket{\phi}\in\mathbbm{C}^d$.

Let us now consider a particular bipartition $T|\bar{T}$ in Eq.
(\ref{sepform}) for which $T$ and $\bar{T}$ are not unions of the
sets $S_k$. Then, there exists a pair of parties $A_m,A_n$
belonging to one of the sets $S_k$ (the same one) such that
$A_m\in T$ and $A_n\in\bar{T}$. For such a pair we use the Schmidt
decompositions of the vectors $\ket{e_T^i}$ and
$\ket{f_{\bar{T}}^i}$,
\begin{equation}\label{Schmidt1}
 \ket{e_T}=\sum_{j}\sqrt{\mu_j}\ket{e_{A_m}^j}\ket{e_{T\setminus A_m}^j},
\end{equation}
and
\begin{equation}\label{Schmidt2}
 \ket{f_{\bar{T}}}=\sum_{j}\sqrt{\nu_j}\ket{f_{A_n}^j}\ket{f_{\bar{T}\setminus
A_n}^j},
\end{equation}
where for simplicity we have skipped the upper index $i$ and the
subscript $T\setminus A_m$ means the subsystem $T$ but the
single-party subsystem $A_m$. Then, the condition
(\ref{condition}) implies
\begin{eqnarray}
&&\sum_{j,j'}\sqrt{\mu_j\nu_{j'}}\ket{f_{A_n}^{j'}}\ket{e_{T\setminus
A_m}^{j}}\ket{e_{A_m}^{j}}\ket{f_{\bar{T} \setminus
A_n}^{j'}}\nonumber\\
&&=\sum_{j,j'}\sqrt{\mu_j\nu_{j'}}\ket{e_{A_m}^{j}}
\ket{e_{T\setminus A_m}^{j}}\ket{f_{A_n}^{j'}}\ket{f_{\bar{T}
\setminus A_n}^{j'}},
\end{eqnarray}
which by virtue of the orthogonality of the vectors in
(\ref{Schmidt1}) and (\ref{Schmidt2}), implies that
$\ket{e_{A_m}^j}=\ket{f_{A_n}^{j'}}$ for any pair of indices
$j,j'$. Denoting then
$\ket{g}=\ket{e_{A_m}^j}=\ket{f_{A_n}^{j'}}$, one finally finds
that
$\ket{e_T^i}\ket{f_{\bar{T}}^i}=\ket{g_{A_m}}\ket{e'_{T\setminus
A_m }}\ket{g_{A_n}}\ket{f'_{\bar{T}\setminus A_n}}$, i.e., every
vector in the decomposition (\ref{Lem3:4}) must be product with
respect to the parties $A_m$ and $A_n$. By repeating this
procedure for all pairs of parties $A_m,A_n$ such that both belong
to one of the sets $S_k$, but $A_m\in T$ and $A_n\in \bar{T}$
(actually, not all pairs are necessary), one finds that every
$\ket{e_T^i}\ket{f_{\bar{T}}^i}$ in the decomposition of
$\varrho_T$ is product with respect to some bipartition
$\mathcal{T}|\bar{\mathcal{T}}$ with $\mathcal{T}$ and
$\bar{\mathcal{T}}$ being unions of groups $S_k$.

By applying exactly the same argument to the remaining
bipartitions, one shows that any $\varrho_{T|\bar{T}}$ in
(\ref{sepform}) is separable with respect to some bipartition
$\mathcal{T}|\bar{\mathcal{T}}$ with $\mathcal{T}$ and
$\bar{\calT}$ being unions of $S_k$, giving us the form
(\ref{biseparable}) and completing the proof.
\end{proof}

Clearly, Lemma 2 from the main text follows directly from the
above one. Precisely, assuming that with respect to some
bipartition $S|\bar{S}$, the state
$\rho_{\boldsymbol{\mathsf{A}}}$ obeys
\begin{equation}
P_{\mathrm{sym}}^S\rho_{\boldsymbol{\mathsf{A}}}
P_{\mathrm{sym}}^{S}=\rho_{\boldsymbol{\mathsf{A}}}
\end{equation}
and
\begin{equation}
 P_{\mathrm{sym}}^{\bar{S}}\rho_{\boldsymbol{\mathsf{A}}}
P_{\mathrm{sym}}^{\bar{S}}=\rho_{\boldsymbol{\mathsf{A}}},
\end{equation}
it follows that if it is not GME, then
$\rho_{\boldsymbol{\mathsf{A}}}$ must be separable across the
bipartition $S|\bar{S}$, that is,
\begin{equation}
 \rho_{\boldsymbol{\mathsf{A}}}=\sum_{i}p_i\rho^i_S\ot
\overline{\rho}^i_{\bar{S}}.
\end{equation}
with $\rho_S^i$ and $\overline{\rho}_{\bar{S}}^i$ being some
states corresponding to the groups $S$ and $\bar{S}$.

The following corollary straightforwardly stems from Lemma
\ref{LemA2}. For any $N$-partite state
$\rho_{\boldsymbol{\mathsf{A}}}$ such that with respect to some
$K$-partition its subsystems corresponding to the sets $S_k$
$(k=1,\ldots,K)$ are symmetric, if $\rho$ does not admit the
decomposition (\ref{sepform}) with $\rho_{T|\bar{T}}$ being
separable across bipartitions $T|\bar{T}$ for which $T$ and
$\bar{T}$ are unions of the sets $S_k$, then $\rho$ is GME. This
fact gives rise to the following theorem, being a generalization
of the main result of our work.

\begin{thm*}
Let $\rho_{A_1\ldots A_K}$ be an entangled state acting on
$\mathcal{H}_{K,d}$ that has a fully local model for generalized
measurements. Then, for any collection of $K$ invertible quantum
channels
\begin{equation}
\Lambda_{A_k\to S_k}:\calB(\mathbbm{C}^d)\to
\calB(\mathcal{S}_{L_k,d'}),
\end{equation}
with $\mathcal{S}_{L_k,d'}$ denoting the symmetric subspace of
$(\mathbbm{C}^{d'})^{\ot L_k}$, the state
\begin{equation}
\sigma_{\boldsymbol{\mathsf{A}}}=(\Lambda_{A_1\to S_1}\ot\ldots\ot
\Lambda_{A_K\to S_K})(\rho_{A_1\ldots A_K})
\end{equation}
has a $K$-local model with respect to the $K$-partition with the
groups $S_k$ defined in Eq. (\ref{partition}), and it is GME.
\end{thm*}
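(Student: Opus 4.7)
The plan is to combine the two appendix lemmas with the invertibility hypothesis on the channels. For the locality part, Lemma \ref{LemA1} applies directly: since $\rho_{A_1\ldots A_K}$ has a fully local model for generalized measurements and the channels $\Lambda_{A_k\to S_k}$ fit the setup of that lemma (the restriction that they output into $\mathcal{S}_{L_k,d'}$ is not used for this part), $\sigma_{\boldsymbol{\mathsf{A}}}$ inherits a $K$-local model for generalized measurements with respect to the $K$-partition $\{S_k\}_{k=1}^{K}$ defined in Eq.~(\ref{partition}).

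For the genuine multipartite entanglement part, I would argue by contradiction. First observe that, by construction, each $\Lambda_{A_k\to S_k}$ outputs into $\mathcal{S}_{L_k,d'}$, so $\sigma_{\boldsymbol{\mathsf{A}}}$ satisfies $P_{\mathrm{sym}}^{S_k}\sigma_{\boldsymbol{\mathsf{A}}}P_{\mathrm{sym}}^{S_k}=\sigma_{\boldsymbol{\mathsf{A}}}$ for every $k$, which is precisely the hypothesis of Lemma \ref{LemA2}. Suppose, towards a contradiction, that $\sigma_{\boldsymbol{\mathsf{A}}}$ is not GME. Then by Lemma \ref{LemA2} and the corollary stated just after its proof, $\sigma_{\boldsymbol{\mathsf{A}}}$ decomposes as a convex combination of states, each separable across some bipartition $\mathcal{T}|\bar{\mathcal{T}}$ in which $\mathcal{T}$ and $\bar{\mathcal{T}}$ are unions of the groups $S_k$.

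Now I would apply the tensor product of inverse channels $\bigotimes_{k=1}^{K}\widetilde{\Lambda}_{S_k\to A_k}$ to both sides. By invertibility the left-hand side reduces to $\rho_{A_1\ldots A_K}$, while on the right-hand side each separable term is mapped to a state that is separable across the induced bipartition of $\{A_1,\ldots,A_K\}$ --- the parties $A_k$ with $S_k\subseteq\mathcal{T}$ on one side and those with $S_k\subseteq\bar{\mathcal{T}}$ on the other --- because the groupwise action of $\bigotimes_{k}\widetilde{\Lambda}_{S_k\to A_k}$ respects any tensor factorization aligned with the $S_k$. Consequently $\rho_{A_1\ldots A_K}$ would be biseparable, contradicting the assumption that it is genuinely entangled (for $K=2$ this contradiction reduces to $\rho$ being separable, which is excluded by the mere entanglement hypothesis).

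The main obstacle --- actually a conceptual subtlety rather than a technical hurdle --- is that nothing in the pullback argument would allow us to transport a biseparable decomposition of $\sigma_{\boldsymbol{\mathsf{A}}}$ along a cut that splits an individual $S_k$, since $\widetilde{\Lambda}_{S_k\to A_k}$ is defined only at the level of the whole group. Lemma \ref{LemA2} is precisely what removes this difficulty: the symmetric-subspace constraint baked into the construction forces any biseparable decomposition of $\sigma_{\boldsymbol{\mathsf{A}}}$ to respect the boundaries between the groups $S_k$, and only such decompositions can be pulled back through the inverse maps. This is why the output of the channels must be restricted to $\mathcal{S}_{L_k,d'}$ rather than to the full $(\mathbbm{C}^{d'})^{\ot L_k}$.
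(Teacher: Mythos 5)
Your proposal is correct and follows essentially the same route as the paper's own proof: Lemma \ref{LemA1} for the $K$-local model, then Lemma \ref{LemA2} plus invertibility of the channels to pull a hypothetical biseparable decomposition of $\sigma_{\boldsymbol{\mathsf{A}}}$ back to one of $\rho_{A_1\ldots A_K}$, yielding a contradiction. You merely spell out more explicitly the step the paper states in one line (that $\bigotimes_k\widetilde{\Lambda}_{S_k\to A_k}$ maps a $\mathcal{T}|\bar{\mathcal{T}}$-separable term to a correspondingly separable state of $A_1,\ldots,A_K$), and you correctly note, as the paper implicitly does, that for $K>2$ the hypothesis must be read as genuine multipartite entanglement of $\rho_{A_1\ldots A_K}$.
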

\begin{proof}
From Lemma \ref{LemA1} it follows that the state
$\sigma_{\boldsymbol{\mathsf{A}}}$ has a $K$-local model for
generalized measurements with respect to the given $K$-partition.
Then, the fact that $\rho_{A_1\ldots A_K}$ is genuinely
multipartite entangled implies that so is
$\sigma_{\boldsymbol{\mathsf{A}}}$. To make it more explicit, let
us assume, in contrary, that $\sigma_{\boldsymbol{\mathsf{A}}}$ is
not GME. Due to Lemma \ref{LemA2}, this means that it admits the
decomposition (\ref{biseparable}), with all $T$ and $\bar{T}$
being unions of the sets $S_k$ (by the definition of the channels
$\Lambda_{A_k\to S_k}$, the $S_k$ subsystems of
$\rho_{\boldsymbol{\mathsf{A}}}$ are symmetric). Since all
channels $\Lambda_{A_k\to S_k}$ $(k=1,\ldots,K)$ are invertible,
this implies that $\rho_{A_1\ldots A_K}$ is not GME, contradicting
the assumption.
\end{proof}

\end{document}